\documentclass[lettersize,journal]{IEEEtran}
\usepackage{graphicx} 
\usepackage{amssymb}
\usepackage{amsmath}
\usepackage{bbm}
\usepackage{tikz}
\usetikzlibrary{automata, positioning}
\usepackage{amsthm}
\usepackage{tikz}
\usetikzlibrary{automata, positioning}
\usepackage{subcaption}
\usepackage{cite}
\usepackage{url}
\usepackage{hyperref}

\newtheorem{theorem}{Theorem}
\newtheorem{lemma}{Lemma}

\begin{document}

\title{Analysis of UAV-Enabled IoT Networks with Energy Harvesting and Wake-Up Radio}

\author{Anthony~Khairallah,~\IEEEmembership{Student Member,~IEEE,}
        Nour~Kouzayha,~\IEEEmembership{Member,~IEEE,}
        Tareq~Y.~Al-Naffouri,~\IEEEmembership{Fellow,~IEEE,}
        and~Hadi~Sarieddeen,~\IEEEmembership{Senior Member,~IEEE.}%
            
\thanks{This work was supported by the AUB University Research Board (URB) and Vertically Integrated Projects (VIP) program and the KAUST Office of Sponsored Research (OSR) under Award No. ORFS-CRG12-2024-6478. Anthony~Khairallah and Hadi~Sarieddeen, are with the Department of Electrical and Computer Engineering (ECE), AUB, Beirut, Lebanon (adk05@mail.aub.edu, hadi.sarieddeen@aub.edu.lb).
Nour~Kouzayha and Tareq Y. Al-Naffouri are with the Department of Computer, Electrical and Mathematical Sciences and Engineering (CEMSE), KAUST, Kingdom of Saudi Arabia (nour.kouzayha@kaust.edu.sa, tareq.alnaffouri@kaust.edu.sa).}
\vspace{-0.9cm}
}


\maketitle

\begin{abstract}
This paper investigates an unmanned aerial vehicle (UAV)-enabled Internet of Things (IoT) architecture that integrates wake-up radio (WuR) and energy harvesting for sustainable device operation. In the proposed system, UAVs transmit radio-frequency (RF) signals that both trigger device activation and replenish stored energy. The IoT device’s behavior is modeled as a discrete-time Markov chain, which captures the evolution of its battery level and operational state (asleep or awake), accounting for both energy harvesting and energy consumption. Leveraging stochastic geometry and discrete-time Markov-chain analysis, we develop a comprehensive mathematical framework to assess system performance. We reveal a tradeoff between transmission frequency and energy consumption, and demonstrate that tuning of system parameters, such as UAV density, can significantly improve both energy efficiency and transmission reliability.

\end{abstract}

\begin{IEEEkeywords}
    Internet of Things (IoT), Wake-up radio (WuR). Energy harvesting (EH), Stochastic geometry, Markov chain.
\end{IEEEkeywords}
\IEEEpeerreviewmaketitle

\vspace{-0.4cm}
\section{Introduction}
\vspace{-0.1cm}

The rapidly expanding Internet of Things (IoT) paradigm envisions massive numbers of connected devices deployed in large-scale and infrastructure-limited environments. These devices often rely on low-power wireless communication and
battery operation, making battery replacement and recharging challenging. Energy-efficient techniques for IoT systems have thus attracted growing interest~\cite{interest2}, particularly energy harvesting (EH) and wake-up radio (WuR) technologies~\cite{Benbuk2023Charging}.

WuR enables ultra-low-power operation by keeping devices in deep sleep and activating them upon receiving a dedicated wake-up signal, thereby drastically reducing idle energy consumption~\cite{RF}. Meanwhile, EH allows IoT devices to convert ambient or dedicated radio frequency (RF) energy into usable power~\cite{energyharvesting1}. While each approach independently offers substantial energy savings, their integration is particularly compelling: WuR minimizes unnecessary wake-ups, and EH replenishes stored energy during sleep periods. Together, they enable near-passive, self-sustaining device operation. Unmanned aerial vehicles (UAVs) further enhance this vision through their mobility and adaptive deployment in challenging IoT environments~\cite{UAV1}. UAVs can act as mobile RF beacons, wireless chargers, and data collectors, making UAV-assisted WuR/EH attractive for mission-critical IoT deployments.

{\color{black}
In this work, we investigate a UAV-assisted IoT architecture combining EH and WuR, where UAVs wake up devices, collect data, and wirelessly recharge batteries. To analyze the coupled interaction between wake-up operation, harvested energy, and uplink transmission, we develop a mathematical framework based on stochastic geometry (SG) and discrete-time Markov chain modeling. Existing works have studied SG-based WuR performance~\cite{SG1,SG2}, and EH-driven battery evolution in terrestrial IoT systems~\cite{QT2,QT1}. Other studies have investigated UAV-assisted wake-up signaling and EH architectures from communication and experimental perspectives, but without jointly modeling battery-state evolution~\cite{SG4,Liu2022UAVWPT,Sheshashayee2022}. The work in~\cite{Ruiz2026Energy} considered WuR-enabled EH management and battery dynamics from an energy-control perspective. 
{\color{black} 
Table~\ref{tab:comparison} summarizes the differences between the proposed framework and prior works. Unlike existing studies, the proposed framework jointly captures UAV-assisted WuR, EH, battery-state evolution, and uplink transmission feasibility. Importantly, wake-up interrupts harvesting, while the battery energy accumulated before activation determines whether an awakened device can transmit. This coupled feedback explains the non-monotonic relation between wake-up and transmission probabilities. Moreover, this joint modeling captures the fundamental trade-offs between wake-up frequency, transmission reliability, and energy efficiency under realistic constraints. The contributions of this paper can be summarized as follows:}


\begin{table}[t]
\centering
{\color{black}
\caption{\color{black}Comparison with representative related works.}
\vspace{-0.1cm}
\footnotesize
\setlength{\tabcolsep}{2.5pt}
\renewcommand{\arraystretch}{0.9}
\begin{tabular}{lccccc}
\hline
\textbf{Work} & \textbf{UAV} & \textbf{WuR} & \textbf{EH} & \textbf{Battery} & \textbf{SG} \\
\hline
\cite{SG2} & $\times$ & \checkmark & $\times$ & $\times$ & \checkmark \\
\cite{QT2} & $\times$ & $\times$ & \checkmark & \checkmark & \checkmark \\
\cite{QT1} & $\times$ & $\times$ & \checkmark & \checkmark & \checkmark \\
\cite{SG4} & \checkmark & \checkmark & $\times$ & $\times$ & \checkmark \\
\cite{Liu2022UAVWPT} & \checkmark & \checkmark & \checkmark & $\times$ & $\times$ \\
\cite{Sheshashayee2022} & \checkmark & \checkmark & $\times$ & $\times$ & $\times$ \\
\cite{Ruiz2026Energy} & $\times$ & \checkmark & \checkmark & \checkmark & $\times$ \\
This work & \checkmark & \checkmark & \checkmark & \checkmark & \checkmark \\
\hline
\label{tab:comparison}
\end{tabular}}
\vspace{-0.9cm}
\end{table}

\begin{enumerate}
\item We develop a unified analytical framework for UAV-enabled IoT networks that jointly models wake-up radio, energy harvesting, and battery state evolution using stochastic geometry and a discrete-time Markov chain.

\item We derive the wake-up and transmission probabilities, and characterize the resulting battery dynamics and energy efficiency, capturing the effects of UAV density, altitude, wake-up threshold, and channel characteristics.

\item  We reveal the coupled trade-off between wake-up frequency, harvested energy accumulation, and uplink transmission feasibility, showing how excessive wake-up activation may degrade transmission performance despite increasing wake-up probability.
\end{enumerate}
}
\vspace{-0.4cm}
\section{System Model and Problem Formulation}
\subsection{Channel and Network Model}
We consider a 2D-network where UAVs follow a homogeneous Poisson point process (PPP), $\Phi = \{\textbf{x}_i\}$ in $\mathbb{R}^2$, with density $\lambda$ (UAVs/m$^2$), and IoT devices are distributed independently as another PPP, $\Phi_{\text{IoT}} = \{\textbf{y}_j\}$ with density $\lambda_{\text{IoT}}$. UAVs hover at a fixed altitude $L_u$ and act as both wake-up beacons and wireless chargers. The considered framework targets hovering or quasi-static UAV deployment regimes commonly adopted during RF charging and data collection phases. {\color{black}Time-varying trajectories, altitude variations, and repositioning would introduce temporal correlation in the serving distance and received RF energy and are left for future work.} We assume that each device is associated with its closest UAV, its serving UAV.
Without loss of generality, we assume a device located at the origin. All UAVs transmit continuously over time slots with a fixed power $P_t$; the received power decays following the distance-dependent pathloss $r_i^{-\eta}$, where $r_i \!=\! \sqrt{||\textbf{x}_i||^2\!+\!L_u^2}$ is the 3D distance from the UAV to the device of interest ($L_u \!\leq\! r_i \!<\! \infty$), and $\eta$ is the path-loss exponent. The environment is also subject to small-scale fading following the Nakagami-$m$ distribution, $h_i \!\sim\! \text{Gamma}(m, m)$, which captures the stronger line-of-sight characteristics associated with UAV-assisted aerial links. {\color{black}The parameter $m$ captures different fading severities, with larger values corresponding to milder LoS-dominated fluctuations.} The instantaneous received RF power in a time slot $t$ is $P_r(t) \!=\! \sum_{\textbf{x}_i\in\Phi} a P_t h_i r_i^{-\eta}$, where $a$ is the power conversion efficiency. 
\vspace{-0.4cm}
\subsection{Wake-Up and Energy Harvesting Protocol}
\vspace{-0.1cm}
We consider a slotted operation in which each slot consists of a downlink (DL) RF beacon phase followed, when applicable, by an uplink (UL) data transmission phase. {\color{black}During the DL phase, the UAV transmits a unified RF beacon that can either activate the WuR when the received power exceeds the sensitivity threshold or be harvested when the device remains asleep.} A device in sleep mode continuously monitors the incoming RF signal using its ultra-low-power WuR receiver, and if the instantaneous received power $P_r(t)$ in that slot exceeds a predefined wake-up threshold $\theta$, the WuR triggers a transition to the active state. {\color{black} This models a low-complexity broadcast WuR based on energy detection, where the received RF envelope is compared locally with a hardware sensitivity threshold and does not require device-specific signaling, UAV-side beam alignment, or instantaneous CSI.} We assume that upon activation, the received RF signal is used to power the WuR receiver, and thus no energy is harvested in that slot; otherwise, if $P_r(t)<\theta$, the device remains asleep and uses the entire RF signal for harvesting.

{\color{black} We consider a slotted system with slot duration $T_s$.} The harvested {\color{black}energy accumulated over $T$ slots} is given as \vspace{-0.15cm}
\begin{equation}\footnotesize
{\color{black}E_S = \sum_{t=t_0}^{t_0+T-1} T_sP_r(t)} 
\leq B,
\label{PST}
\vspace{-0.15cm}
\end{equation}
where $B$ is the battery capacity and $t_0$ is the initial time slot. Note that any energy above the battery level $B$ is discarded. 
\vspace{-0.4cm}
\subsection{Uplink Transmission and Performance Metrics}
\vspace{-0.1cm}
{\color{black} Upon wake-up, the device communicates with its nearest serving UAV and obtains a large-scale path-loss estimate through a pilot or lightweight control signal. This estimate is used for uplink power control; instantaneous small-scale CSI is not required. The device then checks whether its stored energy is sufficient for uplink transmission under channel inversion, which targets a constant received power $\rho$ at the UAV.} The required transmit power at the device is $ \gamma \!=\! \rho r_0^\eta$. If the stored energy exceeds the required transmit power, the device transmits with power $\gamma$ and deducts this amount from the battery, i.e., ${\color{black}E_S(t+1)={\color{black}E_S}(t)-\gamma T_s}$. Otherwise, if ${\color{black}E_S(t)<\gamma T_s}$, the device cannot transmit in that slot and returns to sleep with its battery unchanged, i.e., ${\color{black}E_S}(t+1)={\color{black}E_S}(t)$.

We consider the following performance evaluation metrics:
\begin{itemize}
    \item Wake-up probability, $\mathbb{P}_{\mathrm{w}}$, of the IoT device successfully waking up in a given time slot $t$, i.e., $P_r(t)\ge \theta$.
    \item Transmission probability, $\mathbb{P}_\text{tx}$, of the device waking up successfully and having enough stored energy to transmit at time slot $t$, i.e., $P_r(t) \geq \theta$ and ${\color{black}E_S(t) \geq \gamma T_s}$.
    \item Energy efficiency, $\xi$, {\color{black} defined as the net energy change per successful data transmission (mJ/tx).}
\end{itemize}
\vspace{-0.4cm}
\section{Performance Analysis}
\subsection{Wake-up Probability}
The wake-up probability, defined as the probability that the instantaneous received power exceeds the wake-up threshold $\theta$, is obtained as the complementary cumulative distribution function (CCDF) of $P_r(t)$, 
\begin{equation}\footnotesize
    \mathbb{P}_\text{w} = 1 - F_{P_r}(\theta),
    \label{eq:wakeup}
\end{equation}
where $F_{P_r}(\theta)$ is the CDF of the received power. {\color{black}In the low-density regime, the received power can be approximated by the dominant UAV approximation, where only the serving UAV is considered explicitly, and the contribution of all other UAVs is replaced by their average power~\cite{SG1,domUAV2}},
\begin{equation}\footnotesize
\label{Pr_approx}
P_r(t) \approx a P_t h_0 r_0^{-\eta} + \psi(r_0),\end{equation}
where $h_0$ and $r_0$ are the fading gain and the distance to the serving UAV, and 
\begin{equation}\footnotesize
    \psi(r_0) = \mathbb{E}\Big[\sum_{\textbf{x}_i\in\Phi\setminus\{\textbf{x}_0\}} a P_t h_i r_i^{-\eta}\Big]=\frac{2\pi\lambda a P_t}{\eta - 2}r_0^{2-\eta} 
    \label{eq:psi}
\end{equation}
is obtained via Campbell's theorem, assuming $\mathbb{E}[h_i]=1$~\cite{SG1}. The wake-up probability is derived in Theorem~\ref{theorem:PrCDF}. 
\begin{theorem}
    Using the dominant UAV approximation, the CDF of the received power in a given time slot is given as
\begin{equation}\footnotesize
    F_{P_r}(s) = \int_u^\infty\!\! \frac{1}{\Gamma(m)}\gamma\left[m , m\Big(\frac{s r_0^\eta}{a P_t} - \frac{2\pi\lambda r_0^2}{\eta - 2}\Big)\!\right]\!f_R(r_0) dr_0,
    \label{eq:CDFPr}
\end{equation}
where $u = \max\left\{L_u, \Big(\frac{2\pi\lambda a P_t}{(\eta-2)s} \Big)^\frac{1}{\eta-2}\right\}$, $\Gamma(\cdot)$ and $\gamma[\cdot,\cdot]$ are the Gamma and the lower incomplete gamma functions, and $f_R(r_0)=2\pi \lambda r e^{-\pi\lambda(r^2-L_u^2)}$ is the PDF of the UAV-device 3D distance obtained from the void probability of a PPP. The wake-up probability is obtained by plugging~\eqref{eq:CDFPr} in~\eqref{eq:wakeup} ($s=\theta$).
\begin{IEEEproof}
    See Appendix A.
\end{IEEEproof}
\label{theorem:PrCDF}
\end{theorem}
\vspace{-0.7cm}

\subsection{Battery State Evolution Analysis}

We discretize the battery capacity $B$ into $L=\frac{B}{w}$ equally spaced levels of size $w$ and model the system using a two-dimensional discrete-time Markov chain (DTMC). Each state is denoted by $(l,b)$, where $l\!\in\!\{0,1,\ldots,L\}$ represents the battery level in discrete units of $w$ joules, and $b\!\in\!\{0,1\}$ indicates the device mode ($b=0$ for sleep mode and $b=1$ for active mode). The wake-up threshold is set to $\theta=\tau w/T_{s}$, with $\tau\in\mathbb{N}$. We define $p_i$ and $q_i$ as the probabilities that the harvested energy in a time slot lies between $iw$ and $(i+1)w$, and that the required transmit energy lies between $(i-1)w$ and $iw$, respectively. {\color{black}
For tractability, spatial randomness is averaged in the DTMC by using the unconditional distributions $F_{P_r}$ and $F_\gamma$, so the correlation induced by $r_0$ is not tracked. The resulting DTMC characterizes the average battery dynamics of a typical device over the PPP geometry, rather than the conditional battery evolution for a fixed serving distance.} Hence, $p_i$ and $q_i$ can be expressed as
    \begin{equation}\footnotesize
        p_i = 
         F_{P_r}((i+1)w) - F_{P_r}(iw), \quad  i \!=\! 0, 1, ..., \tau-1, 
    \end{equation}
     and \vspace{-10pt}    \begin{equation}\footnotesize
q_i = F_{\gamma}(iw/T_s) - F_{\gamma}((i-1)w/T_s), \quad i = 1,2,\ldots,L,        \label{eq:qi}
    \end{equation}
where $F_{P_r}(\cdot)$ is given in~\eqref{eq:CDFPr} and $F_{\gamma}(\cdot)$ is the CDF of the required transmit power and is given in Lemma~\ref{lemma:transmit}.
\begin{lemma}
The CDF of the power required to transmit data from the typical device implementing power control to the serving UAV upon successful wake-up is given as~\cite{QT2}, {\color{black}with the minimum transmit power $\rho L_u^{\eta}$ imposed by the UAV altitude.
\begin{equation}\footnotesize
F_\gamma(s) =
\begin{cases}
0, & s < \rho L_u^{\eta}, \\
1 - \exp\left[-\pi\lambda\left(\left(\frac{s}{\rho}\right)^\frac{2}{\eta} - L_u^2\right)\right], & s \ge \rho L_u^{\eta}.
\end{cases}
\vspace{-0.5cm}
\end{equation}\label{lemma:transmit}}
\end{lemma}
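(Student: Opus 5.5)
The plan is to express $\gamma$ as a monotone function of the serving distance and then read off its CDF from the distance distribution already used in Theorem~\ref{theorem:PrCDF}.

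\textbf{Step 1 (monotone transformation).} Under channel inversion the required transmit power is $\gamma=\rho r_0^{\eta}$. Since $\eta>2>0$, the map $r\mapsto \rho r^{\eta}$ is strictly increasing on $[L_u,\infty)$. Hence, for any $s$,
\begin{equation*}\footnotesize
F_\gamma(s)=\mathbb{P}\big(\rho r_0^\eta\le s\big)=\mathbb{P}\big(r_0\le (s/\rho)^{1/\eta}\big).
\end{equation*}
In words, the CDF of $\gamma$ is the CDF of the serving distance evaluated at $(s/\rho)^{1/\eta}$.

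\textbf{Step 2 (serving-distance CDF from the void probability).} The serving UAV is the nearest point of the PPP $\Phi$ in the plane. The 3D distance $r_0$ exceeds $r\ge L_u$ exactly when the disc of horizontal radius $\sqrt{r^2-L_u^2}$ around the origin contains no UAV. The void probability of a homogeneous PPP then gives
\begin{equation*}\footnotesize
\mathbb{P}(r_0>r)=e^{-\pi\lambda(r^2-L_u^2)},\qquad r\ge L_u.
\end{equation*}
Since every UAV is at least at altitude $L_u$, we also have $\mathbb{P}(r_0>r)=1$ for $r<L_u$. Differentiating this survival function recovers the PDF $f_R$ stated in Theorem~\ref{theorem:PrCDF}, which serves as a consistency check.

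\textbf{Step 3 (substitution and case split).} Set $r=(s/\rho)^{1/\eta}$, so that $r^2=(s/\rho)^{2/\eta}$. The condition $r\ge L_u$ is equivalent to $s\ge\rho L_u^{\eta}$, and in that case
\begin{equation*}\footnotesize
F_\gamma(s)=1-\exp\!\big[-\pi\lambda\big((s/\rho)^{2/\eta}-L_u^2\big)\big].
\end{equation*}
For $s<\rho L_u^\eta$ we get $F_\gamma(s)=0$, because the minimum possible required power is $\rho L_u^\eta$, attained when the UAV is directly overhead. This gives both branches of the lemma, matching~\cite{QT2} with the altitude offset included.

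\textbf{Main obstacle.} The computation itself is routine. The only step needing care is the modeling point that $r_0$ is the unconditional nearest-UAV distance. Conditioning on wake-up, as the phrase ``upon successful wake-up'' suggests, would correlate $\gamma$ with the event $P_r\ge\theta$ through $r_0$. I would justify dropping this conditioning with the paper's stated DTMC simplification, which uses unconditional distributions and does not track the $r_0$-induced correlation. Under that convention the lemma's $F_\gamma$ is simply the marginal law of $\rho r_0^\eta$.
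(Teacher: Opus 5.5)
Your proposal is correct and follows essentially the same route as the paper, which cites \cite{QT2} rather than proving the lemma: write $\gamma=\rho r_0^{\eta}$, use the void-probability survival function $e^{-\pi\lambda(r^2-L_u^2)}$ of the 3D serving distance, and substitute $r=(s/\rho)^{1/\eta}$, with the altitude $L_u$ producing the zero branch below $\rho L_u^{\eta}$. Your remark that ``upon successful wake-up'' is handled by taking the unconditional law of $r_0$ matches the paper's stated DTMC simplification.
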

The probabilities $\{p_i\}$ and $\{q_i\}$ describe upward and downward transitions due to harvesting while asleep, and energy consumption when active. To ensure consistency,
\begin{equation}\footnotesize
\label{unity_sum}
\sum_{i=0}^{\tau-1}p_i + \mathbb{P}_\text{w} =1,
\end{equation}
where $\mathbb{P}_w$ denotes the probability that the harvested power exceeds the wake-up threshold, causing the device to stop energy accumulation and switch to the active state. The state transitions are classified into four cases, for $i, j \!\in\! \{0, 1, ..., L\}$:
\begin{enumerate}
    \item Awake $(i,1)\rightarrow$ Awake $(j,1)$:
The device cannot remain awake across consecutive slots, as transmission is attempted immediately and the device then returns to sleep. Hence, the transition probability is
        \begin{equation}\footnotesize
    \textbf{P}_{(i,1), (j, 1)} = 0.
    \end{equation}
    
    \item Awake $(i,1)\rightarrow$ Asleep $(j,0)$:
After successful activation, the device attempts transmission and then returns to sleep. If $j<i$, $(i-j)w$ energy units are consumed with probability $q_{i-j}$. If transmission does not occur due to insufficient stored energy, the battery level remains unchanged ($j=i$) with probability $1-\sum_{k=1}^i q_k$. Thus, the transition probability is
    \begin{equation}\footnotesize
    \textbf{P}_{(i, 1), (j, 0)} = \begin{cases}
        q_{i-j} & j < i, \\
        1 - \sum_{k=1}^i q_k & j = i,\\
        0, & \text{otherwise}.
    \end{cases}
    \end{equation}

    \item Asleep $(i,0)\rightarrow$ Asleep $(j,0)$:
In this case, the device harvests energy but remains below the wake-up threshold $\tau$. The battery level increases by $(j-i)w$ with probability $p_{j-i}$ for $j-i<\tau$. If the battery saturates at $j=L$, any excess harvested energy is discarded. The transition probability is given by
    \begin{equation}\footnotesize
    \textbf{P}_{(i, 0), (j, 0)} = \begin{cases}
        p_{j-i}, & i \leq j  \text{ and $j-i < \tau$,} \\
        \sum_{k=L-i}^{\tau-1}p_{k}, & i \leq j = L, \\
        0, & \text{otherwise}.
    \end{cases}
    \end{equation}

    \item Asleep $(i,0)\rightarrow$ Awake $(j,1)$:
Represents a wake-up event when the received power exceeds $\theta$. The battery level remains unchanged. The transition probability is
    \begin{equation}\footnotesize
        \textbf{P}_{(i, 0), (j, 1)} = \begin{cases}
    \mathbb{P}_\text{w}, & i = j, \\
    0 & \text{otherwise}.
    \end{cases}
    \end{equation}
\end{enumerate}

For notation simplicity, we redefine $p'_i$ to account for the case when harvesting remains below threshold and the device stays asleep. Thus, for $\sum_{i=0}^{L-1}p'_i =1- \mathbb{P}_\text{w}$, $p'_i$ is expressed as
\begin{equation}\footnotesize
p'_i = p_i \mathbbm{1}(i < \tau) = \begin{cases}
        p_i, & i < \tau, \\
        0, & \text{otherwise}.
    \end{cases}
    \label{eq:ppi}
    \end{equation}

The $2(L+1)\times 2(L+1)$ DTMC transition matrix $\mathbf{P}$ is defined over the state space $\mathcal{S}=\{(0,1),(0,0),(1,1),(1,0),\dots,(L,1),(L,0)\}$, as illustrated in Theorem 2. Fig.~\ref{fig:state_fig} shows a small example with battery capacity $B=3w$ ($L=3$) and wake-up threshold $\theta=2w$ ($\tau=2$), where possible harvesting increments are $w$, $2w$, $3w$. The transitions highlight how the device alternates between EH in the sleep state and transmitting in the active state.

\begin{figure}[t!]
    \centering
    \begin{tikzpicture}[->, auto, node distance=2cm and 3.5cm, scale=0.5, transform shape, semithick]
        \tikzstyle{every state}=[draw=black, thick, minimum size=1cm]

        \node[state] (A) {$(0,0)$};
        \node[state] (B) [right=of A] {$(1,0)$};
        \node[state] (C) [right=of B] {$(2,0)$};
        \node[state] (D) [right=of C] {$(3,0)$};
        \node[state] (E) [below=of A] {$(0,1)$};
        \node[state] (F) [below=of B] {$(1,1)$};
        \node[state] (G) [below=of C] {$(2,1)$};
        \node[state] (H) [below=of D] {$(3,1)$};

        \path 
            (E) edge[bend right=30,dashed,color=blue] node {\(1\)} (A)
            (A) edge[loop above] node {\(p'_0\)} (A)
                edge[bend left=30] node {\(p'_1\)} (B)
                edge[bend left=45] node {\(p'_2\)} (C)
                edge[bend left=45] node {\(p'_3\)} (D)
                edge[bend right=30] node {\(\mathbb{P}_\text{w}\)} (E)
            (F) edge[bend right=30,dashed,color=black] node[pos=0.2] {\(1-q_1\)} (B)
            (B) edge[loop above] node {\(p'_0\)} (B)
                edge[bend left=30] node {\(p'_1\)} (C)
                edge[bend left=45] node {\(p'_2+p'_3\)} (D)
                edge[bend right=30] node[left] {\(\mathbb{P}_\text{w}\)} (F)
            (G) edge[bend right=30,dashed,color=black] node[right] {\(1-q_1-q_2\)} (C)
            (C) edge[loop above] node {\(p'_0\)} (C)
                edge[bend right=30] node[left] {\(\mathbb{P}_\text{w}\)} (G)
                edge[bend left=30] node {\(p'_1+p'_2+p'_3\)} (D)
            (H) edge[bend right=30,dashed,color=black] node[right] {\(1-q_1-q_2-q_3\)} (D)
            (D) edge[loop above] node {\(p'_0+p'_1+p'_2+p'_3\)} (C)
                edge[bend right=30] node {\(\mathbb{P}_\text{w}\)} (H)
            (F) edge[dashed] node {\(q_1\)} (A)
            (G) edge[dashed] node[pos=0.3] {\(q_2\)} (A)
                edge[dashed] node[pos=0.3] {\(q_1\)} (B)
            (H) edge[dashed] node[pos=0.2] {\(q_3\)} (A)
                edge[dashed] node[pos=0.2,above] {\(q_2\)} (B)
                edge[dashed] node[pos=0.2,above] {\(q_1\)} (C);

    \end{tikzpicture}
    \caption{Markov chain with $L=3$ and $\tau=2$.}
    \label{fig:state_fig}
    \vspace{-0.6cm}
\end{figure}
\begin{theorem}
        The transition matrix that characterizes the joint dynamics of the device's battery and operational states in a UAV-enabled IoT system, with WuR and EH is given as 
        \begin{equation}\footnotesize
            \textbf{P} = \begin{pmatrix}
    0 & 1 & 0 & 0 & \hdots & 0 & 0 \\
    \mathbb{P}_\text{w} & p'_0 & 0 & p'_1 & \hdots & 0 & p'_L \\
    0 & q_1 & 0 & 1 - q_1 & \hdots & 0 & 0 \\
    0 & 0 & \mathbb{P}_\text{w} & p'_0 & \hdots & 0 & p'_{L-1} + p'_L\\
    \vdots & \vdots &  \vdots &\vdots& \ddots & \vdots & \vdots\\
    0 & q_L & 0 & q_{L-1} &  \hdots & 0 & 1-q_1-...-q_L \\
    0 & 0 & 0 & 0 &  \hdots & \mathbb{P}_\text{w} & p'_0 + ... + p'_L \\
\end{pmatrix},
        \end{equation}
    where $p'_i$ and $q_i$ are given in \eqref{eq:ppi} and \eqref{eq:qi} respectively. 
    \end{theorem}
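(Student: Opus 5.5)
The plan is to assemble $\mathbf{P}$ block by block from the four transition cases listed just before the statement, after fixing the state ordering $\mathcal{S}=\{(0,1),(0,0),(1,1),(1,0),\dots,(L,1),(L,0)\}$. Under this ordering, state $(i,1)$ sits in row/column $2i+1$ and state $(i,0)$ in row/column $2i+2$. So $\mathbf{P}$ is naturally an $(L+1)\times(L+1)$ array of $2\times 2$ blocks $\mathbf{P}^{(i,j)}$, whose rows and columns are indexed by the mode $b\in\{1,0\}$. I will fill each block entry with the probability given in the corresponding case.

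First, the awake rows. Case~1 gives $\mathbf{P}_{(i,1),(j,1)}=0$ for all $i,j$, so every odd column is zero in every odd row. Case~2 fills the even columns of row $2i+1$:
\begin{itemize}
\item entry $q_{i-j}$ in column $(j,0)$ for $j<i$;
\item entry $1-\sum_{k=1}^i q_k$ on the block diagonal $(i,0)$.
\end{itemize}
For $i=0$ the sum is empty, so row $(0,1)$ is $(0,1,0,\dots,0)$, matching the first row. For $i=L$ the row reads $(0,q_L,0,q_{L-1},\dots,0,1-q_1-\dots-q_L)$, matching the second-to-last row. This part of the matrix is lower block-triangular.

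Second, the asleep rows. Case~4 places $\mathbb{P}_\text{w}$ only in column $(i,1)$ of row $(i,0)$, i.e.\ on the block diagonal. Case~3 places $p_{j-i}$ in column $(j,0)$ for $i\le j<L$ with $j-i<\tau$, and the aggregated tail $\sum_{k\ge L-i}$ in column $(L,0)$. Here I use the redefinition \eqref{eq:ppi}, $p'_k=p_k\mathbbm{1}(k<\tau)$, to write the constraint $j-i<\tau$ uniformly as the entry $p'_{j-i}$ for $j<L$, and the saturating entry as $\sum_{k=L-i}^{L}p'_k$. The upper limit is harmless because $p'_k=0$ for $k\ge\tau$. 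This reproduces row $(0,0)$ as $(\mathbb{P}_\text{w},p'_0,0,p'_1,\dots,0,p'_L)$, row $(1,0)$ with last entry $p'_{L-1}+p'_L$, and row $(L,0)$ as $(0,\dots,\mathbb{P}_\text{w},p'_0+\dots+p'_L)$. These asleep rows form the upper block-triangular part.

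Finally, I will verify that $\mathbf{P}$ is stochastic, which also checks consistency.
\begin{itemize}
\item \textbf{Awake rows.} Each sums to $\sum_{k=1}^i q_k+1-\sum_{k=1}^i q_k=1$ by construction.
\item \textbf{Asleep rows.} Each sums to $\mathbb{P}_\text{w}+\sum_{k=0}^{L-i-1}p'_k+\sum_{k=L-i}^{L}p'_k=\mathbb{P}_\text{w}+\sum_{k}p'_k$. This equals $1$ by \eqref{unity_sum}, together with the fact that the $p'_k$ vanish beyond $\tau-1$.
\end{itemize}

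The main obstacle is the bookkeeping at the boundaries. The issues are the interplay between the truncation at $\tau$ and the saturation at $L$ when $\tau>L-i$ or $\tau\le L-i$, and whether $\sum_{k=1}^i q_k\le 1$, so that the diagonal awake entry is a valid probability. The latter holds because the $q_k$ are increments of the CDF $F_\gamma$ from Lemma~\ref{lemma:transmit} over disjoint intervals, whose sum is $F_\gamma(iw/T_s)\le 1$. The former is handled by the indicator in $p'_k$, which lets the tail sum extend to $L$ irrespective of $\tau$.
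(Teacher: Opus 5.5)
Your block-by-block assembly follows the same route as the paper. The paper gives no separate proof and simply reads $\mathbf{P}$ off the four enumerated transition cases. Your treatment of the awake rows, including $\sum_{k=1}^{i}q_k=F_\gamma(iw/T_s)\le 1$, is fine.

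The flaw is in the boundary step you yourself singled out. You assert that the indicator in $p'_k$ ``lets the tail sum extend to $L$ irrespective of $\tau$.'' The indicator only makes it harmless to extend a sum \emph{beyond} $\tau-1$. It cannot restore terms that a cutoff at $L$ removes. Case~3 gives the saturating entry $\sum_{k=L-i}^{\tau-1}p_k$, and
\[
\sum_{k=L-i}^{\tau-1}p_k=\sum_{k=L-i}^{L}p'_k \quad\text{holds only when } \tau\le L+1 .
\]
If $\tau\ge L+2$, every asleep row of the displayed matrix loses $p_{L+1},\dots,p_{\tau-1}$. For example, row $(L,0)$ sums to $1-\sum_{k=L+1}^{\tau-1}p_k$, so the matrix is generally not stochastic. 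Your stochasticity check hides the same step: writing $\sum_{k=0}^{L}p'_k=\sum_k p'_k$ requires $p'_k=0$ for $k>L$, which again means $\tau\le L+1$.

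So you need an explicit hypothesis relating the wake-up threshold to the battery size. The paper assumes it tacitly: $\tau\le L$, i.e.\ $\theta T_s\le B$, which is exactly what its normalization $\sum_{i=0}^{L-1}p'_i=1-\mathbb{P}_\text{w}$ encodes. State $\tau\le L+1$ (or the paper's $\tau\le L$) up front. Then invoke it both when identifying the $(L,0)$ column and in the row-sum check; with that, your argument is complete.
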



\vspace{-0.5cm}
\subsection{Transmission Probability}
With the DTMC and its transition matrix defined, the steady-state probability vector $\boldsymbol{\pi}$ is obtained by solving
\begin{equation}\footnotesize
    \begin{cases}
        \boldsymbol{\pi}\textbf{P} = \boldsymbol{\pi} \\
        \boldsymbol{\pi 1} = 1,
    \end{cases}
\end{equation}
where $\boldsymbol{1} = \begin{pmatrix}
    1 & 1 & \hdots & 1
\end{pmatrix}^t$.
To reduce computational complexity, we use the Sheskin algorithm~\cite{Sheskin}, which avoids direct matrix inversion by iteratively reducing the transition matrix using additions and multiplications. Once reduced to a single equation, the steady-state probabilities are computed via back-substitution. This approach is particularly efficient for large, structured Markov chains. After obtaining the steady-state probabilities $\boldsymbol{\pi}$, the transmission probability $\mathbb{P}_\text{tx}$ is calculated as the average probability of transmitting over all awake states,
\begin{equation}\footnotesize
\mathbb{P}_\text{tx} = \sum_{i=1}^L\boldsymbol{\pi}_{(i, 1)}\sum_{j=1}^iq_j.
\label{eq:Ptx}
\end{equation}
\vspace{-0.6cm}
\section{Energy Efficiency Analysis}
We evaluate the energy efficiency by analyzing a single operational cycle, defined as the period starting when the device is in the sleep mode harvesting RF energy, followed by a wake-up event, an attempted transmission (if feasible), and a return to sleep. Each cycle consists of a random number of harvesting slots, exactly one wake-up slot, and at most one transmission slot. {\color{black}The following analysis evaluates the average net energy variation per successful transmission under the SG/DTMC framework developed above. Accordingly, the resulting metric inherits the previously stated assumptions on the dominant-UAV received-power approximation, average spatial geometry, quasi-static UAV service phase, RF harvesting model, and uplink energy consumption.} We denote the received energy per slot by $E_r(t)=T_sP_r(t)$ and assume $T_s=1$.

We define $E_G$ as the harvested energy during this cycle, $E_T$ as the energy consumed for transmission, and $E_S(t_0)$ as the stored battery energy at the beginning of the cycle. The net energy variation per cycle is expressed as $\Delta E = E_G - E_T$.
The stored energy at the end of the cycle is given as $E_S(t_0+\Delta T) = E_S(t_0) + \Delta E$, where $\Delta T$ is the cycle duration.
The expected battery energy at the beginning of the cycle is {\color{black}
\begin{equation}\footnotesize
\mathbb{E}[E_S(t_0)] =
\sum_{j = 0}^{L} j w
\sum_{i = 0}^{L} \pi_{i \mid \text{awake}} \mathbf{P}_{(i,1),(j,0)},
\label{ENERGY1}
\end{equation}
where $\pi_{i \mid \text{awake}} =
\frac{\boldsymbol{\pi}_{(i,1)}}{\sum_{\ell=0}^{L}\boldsymbol{\pi}_{(\ell,1)}}$ is the steady-state probability that the device is in battery level $i$ conditioned on being awake.}

The harvested energy is calculated as the accumulated energy until the first wake-up event occurs, i.e., until $P_r(t)$ exceeds the wake-up threshold $\theta$. Let $H$ denote the number of harvesting slots; the harvested energy in one cycle is
\begin{equation}\footnotesize
E_G = \min\left\{\sum_{t = t_0}^{H} E_r(t), \quad B  - E_S(t_0)\right\}.
\end{equation}
The number of harvesting slots $H$ follows a geometric distribution with success probability $\mathbb{P}_w$, giving an average of $(1-\mathbb{P}_w)/\mathbb{P}_w$. Each such slot contributes an expected harvested energy of $\mathbb{E}[E_r | P_r< \theta]$. Combining these facts and using the concavity of the $\min$ operator yields
\begin{equation}\footnotesize
\mathbb{E}[E_G] \leq \min\left\{\frac{1-\mathbb{P}_\text{w}}{\mathbb{P}_\text{w}}T_s\mathbb{E}[P_r | P_r < \theta],\quad B - \mathbb{E}\left[E_S(t_0)\right] \right\},
\label{ENERGY2}
\end{equation}
where the conditional term $\mathbb{E}[P_r | P_r < \theta]$ is derived from the CDF of the received power and is given as
\begin{equation}\footnotesize
    \mathbb{E}[P_r | P_r < \theta] = \theta-\frac{1}{F_{P_r}(\theta)}\int_{0}^{\theta} F_{P_r}(x)dx.
    \label{eq:cond}
\end{equation}
\begin{figure*}[h]
    \centering
    \begin{subfigure}[t]{0.32\textwidth}
        \includegraphics[width=\linewidth]{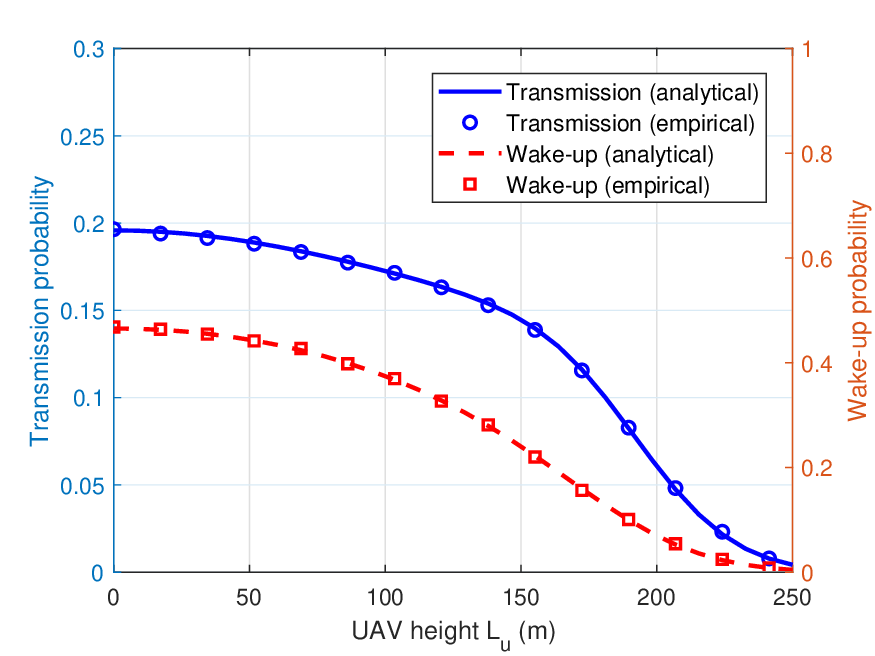}
        \caption{UAV altitude}
        \label{fig:Lu}
    \end{subfigure}
    \hfill
    \begin{subfigure}[t]{0.32\textwidth}
        \includegraphics[width=\linewidth]{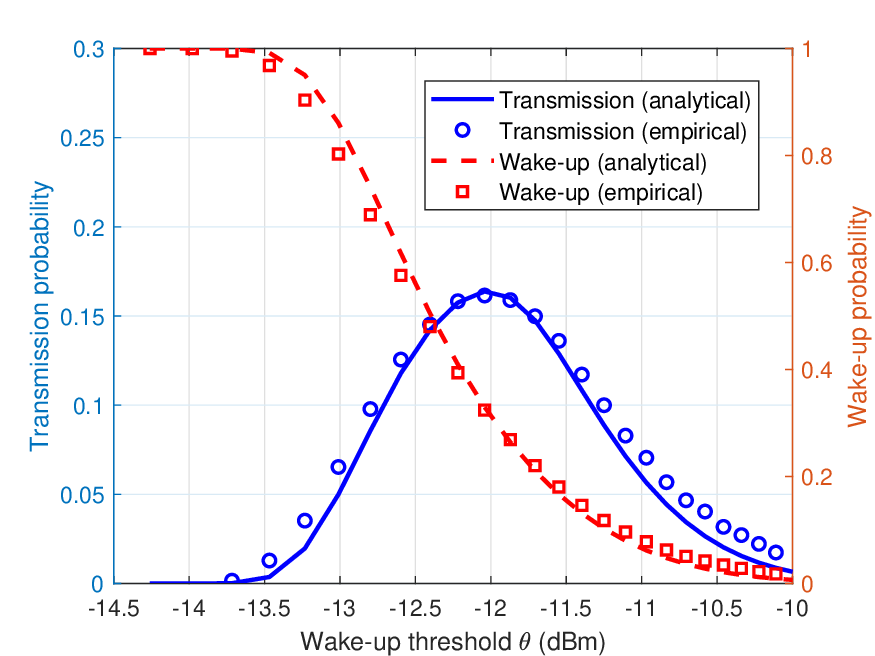}
        \caption{Wake-up threshold}
        \label{fig:theta}
    \end{subfigure}
    \hfill
    \begin{subfigure}[t]{0.32\textwidth}
        \includegraphics[width=\linewidth]{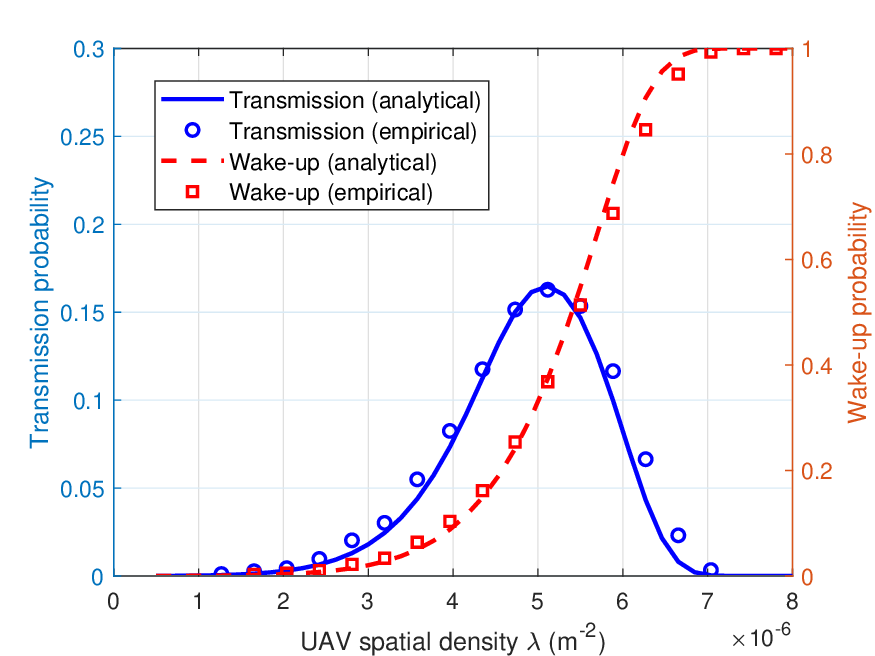}
        \caption{UAV density}
        \label{fig:lambda}
    \end{subfigure}
    \caption{Impact of system parameters on transmission and wake-up probabilities.}
    \label{fig:combined}
    \vspace{-0.75cm}
\end{figure*}
Finally, the transmission energy is obtained by averaging over all active states and is given as
{\color{black}
\begin{equation}\footnotesize
\mathbb{E}[E_T] =
\sum_{i=0}^{L} \pi_{i \mid \text{awake}}
\sum_{j=1}^{i} q_j j w.
\label{ENERGY4}
\end{equation}
}

The expected number of cycles required before a successful transmission is denoted by $N$. {\color{black}For tractability, we approximate successive operational cycles as independent transmission attempts with success probability $\mathbb{P}_\text{tx}$, where $\mathbb{P}_\text{tx}$ is given in~\eqref{eq:Ptx}. Under this approximation, $N$ follows a geometric distribution, yielding $\mathbb{E}[N] \approx 1/\mathbb{P}_\text{tx}$.} Consequently, the net energy variation per successful transmission, reflecting the energy efficiency of the proposed architecture, is approximated as 
\vspace{-0.1cm}
\begin{equation}\footnotesize
\label{PTX}
\xi = \mathbb{E}[N]\mathbb{E}[\Delta E] \approx \frac{\mathbb{E}[\Delta E]}{\mathbb{P}_\text{tx}}.
\end{equation}

\vspace{-0.6cm}
\section{Numerical Results}
In this section, we validate the analytical framework using Monte Carlo simulations and illustrate the impact of key system parameters. Unless otherwise stated, default values are \( \lambda = 5 \times 10^{-6} \, \text{m}^{-2} \), \( a = 1 \), \( L_u = 120 \, \text{m} \), \( P_t = 1 \, \text{W} \), \( \eta = 2.2 \), \( m = 4 \), \( \rho = 1 \times 10^{-9} \, \text{W} \), \( \theta = -12 \, \text{dBm} \), and \( B = 1 \, \text{mJ} \).

{\color{black} The results in Fig.~\ref{fig:combined} indicate close agreement between analytical and empirical wake-up and transmission probabilities across different parameter sweeps, confirming the validity of the DTMC approximation and the accuracy of the UAV approximation over the considered altitude, threshold, and density ranges.} Fig.~\ref{fig:Lu} shows the wake-up probability $\mathbb{P}_\text{w}$ and transmission probability $\mathbb{P}_\text{tx}$ as functions of the UAV altitude $L_u$. As $L_u$ increases, the wake-up probability decreases monotonically due to increased path loss. {\color{black} At the same time, the smaller received power also limits the amount of harvested energy, making successful uplink transmissions less likely and therefore reducing the transmission probability.}

Fig.~\ref{fig:theta} illustrates the impact of the wake-up threshold $\theta$. Since $\theta$ reflects the hardware sensitivity of the WuR circuit, a higher threshold reduces the likelihood of activation and, as a result, the wake-up probability. The transmission probability, however, is close to zero for low values of $\theta$, as devices wake up almost in every slot without being able to accumulate sufficient energy in the battery. At moderate thresholds, wake-ups become less frequent, allowing devices to accumulate energy, and the transmission probability increases sharply. At very high thresholds, the transmission probability drops gradually as wake-up events become rare. 

Fig.~\ref{fig:lambda} illustrates the impact of UAV density $\lambda$. As $\lambda$ increases, the wake-up probability rises and quickly saturates, as the received power increases with more UAVs. The transmission probability initially increases because wake-up events are infrequent and harvesting intervals are long. As the wake-up probability saturates with additional UAVs, devices wake up almost continuously with insufficient stored energy, leading to a sharp decline in transmission probability. This highlights the importance of jointly modeling WuR and EH, which is uniquely enabled by the proposed Markov chain framework.


Fig.~4 examines the cycle-level energy terms versus the wake-up threshold $\theta$. Increasing $\theta$ makes wake-up events less frequent, allowing longer harvesting periods and larger stored energy. Hence, $E_G$ initially increases with $\theta$, but eventually peaks when wake-ups become too rare. The transmission energy $E_T$ also increases because higher thresholds tend to activate devices after more energy has been accumulated. The net variation $\Delta E=E_G-E_T$, therefore, captures the balance between harvesting and transmission expenditure and exhibits a non-monotonic trend. Its peak occurs at a different threshold from the one maximizing $P_{\rm tx}$ in Fig.~3(b), showing that maximum transmission probability and maximum net energy surplus are not achieved at the same operating point. Positive $\Delta E$ values indicate regimes where the device is energy-sustainable on average under the proposed model.

\begin{figure}[t!]
    \centering
    \includegraphics[width=0.7\linewidth]{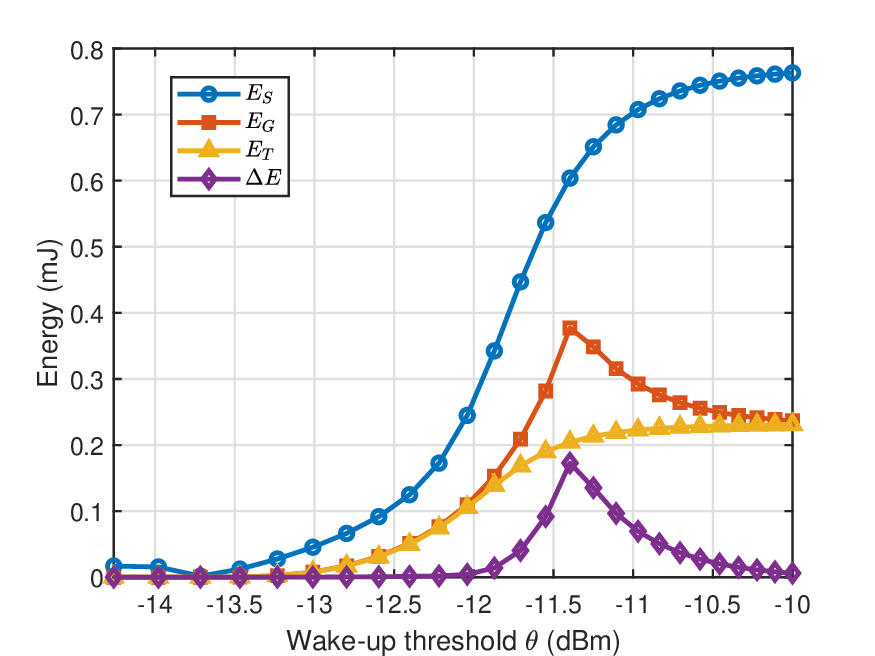}
    \caption{\color{black}Cycle-level energy terms versus wake-up threshold $\theta$.}
    \label{energy}
    \vspace{-0.7cm}
\end{figure}

\vspace{-0.35cm}
\section{Conclusion}
\vspace{-0.1cm}
This paper introduced an analytical framework for UAV-assisted IoT networks that combine energy harvesting and wake-up radio technologies to improve energy efficiency and device sustainability. Modeling device operation with a discrete-time Markov chain and applying stochastic geometry and queuing theory, we characterized key behaviors including energy harvesting, wake-up, and data transmission. The results highlight the advantages of integrating energy harvesting with wake-up signaling, enabling more autonomous and durable IoT deployments, especially in remote areas. 

\vspace{-0.4cm}
\section*{Appendix A}
\vspace{-0.2cm}
{\color{black}To obtain the received-power CDF, we condition on the serving-UAV distance $r_0$. Under the dominant-UAV approximation, the serving UAV contributes the random fading-dependent term, whereas the aggregate contribution of the remaining UAVs is represented by its conditional mean $\psi(r_0)$.} The CDF of the received power $P_r$ can be derived as
\begin{equation}\footnotesize
 \begin{aligned}
    &F_{P_r}(s) 
    = \mathbb{P}\left[a P_t h_0 r_0^{-\eta} + \psi(r_0) \leq s\right] \\
    &= \mathbb{E}_{R}\left[\mathbb{P}\left[h_0 \leq \frac{s - \psi(r_0)}{aP_t}r_0^\eta\right] \Big| R = r_0\right] \\
    &\overset{(a)}{=} \mathbb{E}_{R}\left[ 
    F_{h_0} \left( \frac{s - \psi(r_0)}{aP_t}r_0^\eta\right)\mathbbm{1}\left( \frac{s - \psi(r_0)}{aP_t}r_0^\eta > 0 \right) \Big| R = r_0\right] \\
    &\overset{(b)}{=} \!\!\!\int_{L_u}^{\infty}\!\!\! F_{h_0} \!\left(\! \frac{s r_0^\eta}{aP_t} \!-\! \frac{2\pi\lambda r_0^2}{\eta - 2}\right)\! \mathbbm{1}\!\left( r_0^{\eta-2} \!>\!\! \frac{2\pi\lambda aP_t}{(\eta-2)s} \right)\! f_R(r_0) \! \, dr_0, \\
\end{aligned}   
\end{equation}
where (a) follows from conditioning on the main-link distance $r_0$, and (b) follows from evaluating the expectation and substituting $\psi(r_0)$ using \eqref{eq:psi}. The expression in \eqref{eq:CDFPr} is obtained by applying the gamma CDF $F_{h_0}(\cdot)$ corresponding to Nakagami-$m$ fading and by denoting $u = \max\{L_u, \left(\frac{2\pi\lambda aP_t}{(\eta-2)s} \right)^\frac{1}{\eta-2} \}$.

\vspace{-0.5cm}
\bibliographystyle{IEEEtran}
\bibliography{references}

\end{document}